\newtheorem{thm}{Theorem}
\newtheorem{lem}[thm]{Lemma}
\newtheorem{assu}{\textit{Assumption}}
\newcommand{\beq}{\begin{equation}}
\newcommand{\eeq}{\end{equation}}
\newcommand{\bey}{\begin{eqnarray}}
\newcommand{\eey}{\end{eqnarray}}
\newcommand{\mmm}[1]{{\left\vert\kern-0.25ex\left\vert\kern-0.25ex\left\vert #1
    \right\vert\kern-0.25ex\right\vert\kern-0.25ex\right\vert}}
\newcommand{\mm}[1]{{\left\vert\kern-0.25ex\left\vert #1
    \right\vert\kern-0.25ex\right\vert}}
\newcommand{\ep}{\epsilon}
\newcommand{\xb}{\mathbf{x}}
\newcommand{\ub}{{\bf u}}
\newcommand{\yb}{{\bf y}}
\newcommand{\Kb}{{\bf K}}
\newcommand{\Fb}{{\bf F}}
\newcommand{\Od}{{\Omega}}
\newcommand{\omd}{\omega_\delta}
\newcommand{\rds}{\rho_\delta(s)}
\newcommand{\del}{\delta}
\newcommand\mbG{{\mathbb{G}_{\rho}^\delta}}
\newcommand\mGz{{\mathbb{G}_{0}}}
\newcommand\mfG{{\mathbb{G}_{\rho}^\delta}}
\begin{document}

\title{Stability of nonlocal Dirichlet integrals and implications for peridynamic correspondence material modeling\thanks{This 
work is supported in part by the U.S.~NSF grant
DMS-1719699, and AFOSR MURI center for material failure prediction through peridynamics.}}


\author{Qiang Du\thanks{Department of Applied Physics and Applied Mathematics, 
Columbia University, New York, NY 10027; {\tt qd2125@columbia.edu}}
\and  Xiaochuan Tian\thanks{Department of Applied Physics and Applied Mathematics, 
Columbia University, New York, NY 10027; Present address:
Department of Mathematics, University of Texas, Austin,
TX 78712; {\tt xtian@math.utexas.edu}}}


\maketitle

\begin{abstract}
Nonlocal gradient operators are basic elements of nonlocal vector calculus that play
 important roles in nonlocal modeling and analysis. 
 In this work, we extend earlier analysis on nonlocal gradient
  operators. In particular, we study a 
  nonlocal Dirichlet integral
 that is given by a quadratic energy functional 
  based on  nonlocal gradients. Our main finding, which differs from claims made in previous
  studies,  is that the coercivity and stability
  of this nonlocal continuum energy functional may hold for some properly chosen nonlocal interaction
  kernels but may fail for some other ones. This can be significant for possible applications 
  of nonlocal gradient operators in various nonlocal models. In particular, we discuss some
  important implications  for the  peridynamic correspondence material models. 
\end{abstract}

\begin{keywords}
Nonlocal gradient, nonlocal models, peridynamics, elasticity, constitutive relation, stability, coercivity
\end{keywords}

\begin{AMS}
45A05, \  45K05, \ 47G10, \  74G65
\end{AMS}

\pagestyle{myheadings} \thispagestyle{plain}
\markboth{DU AND TIAN}{STABILITY OF DIRICHLET INTEGRALS WITH NONLOCAL GRADIENT}


\section{Introduction}
\label{intro}
\setcounter{equation}{0}
Recently, there have been much interests developing nonlocal models for a variety of problems 
arising in physics, biology,  materials and social sciences \cite{heat,image1,image2,image3,Tad2}. 
A nonlocal model made up by nonlocal integral operators
can potentially allow more singular solutions than the classical differential equation counterpart, thus offering 
great promise in the effective modeling of singular defects and anomalous properties such as 
cracks and fractures \cite{silling1}.
Nonlocal gradient operators are basic elements of nonlocal vector calculus that play
 important roles in nonlocal modeling and analysis  \cite{sys1,sys2,MeDu15}. 
 The development of a systematic mathematical framework for nonlocal
problems, in parallel to that for local classical partial differential equations (PDEs), in turn has
provided foundation and clarity to practical nonlocal modeling techniques such as peridynamics.
In particular, 
the rigorous mathematical studies of nonlocal gradient operators
  have found successful  applications ranging from nonlocal gradient recovery for
  robust a posteriori stress analysis in nonlocal mechanics to nonlocal in time modeling
  of anomalous diffusion \cite{dtty16,dyz16}.  Asymptotically compatible schemes to discretize
  the nonlocal gradient operators have also been presented in \cite{dtty16}, following the
  framework given in \cite{non10,non11}.

In this work, we continue our analysis of nonlocal gradient operators initiated in \cite{sys1,sys2} and
further explored in \cite{dtty16,dyz16,MeDu15}. We address the coercivity and stability
of energy functionals with the energy density formed by the nonlocal gradient. As a representative example,
 the functional considered here is what we refer to as a nonlocal Dirichlet integral, which is simply  a 
 quadratic energy of the nonlocal gradient. Given that
the most popular forms of the nonlocal gradients are often exclusively determined by
some nonlocal
interaction kernels (micro-modulus functions),  the issue of coercivity and stability of simple
energies like the Dirichlet integral rests largely
on properties of the underlying interaction kernel  functions (beside possibly boundary conditions or nonlocal 
constraints). While there were claims made in the literature on the universal loss of coercivity and stability
for all feasible kernels,  
we show that, contrary to such claims, there is a 
 class of kernels that can assure the coercivity and stability.  The coercviity result established here is
 quite strong in the sense that it holds uniformly with respect to the horizon parameter that measures
 the range of nonlocality, so that,
 in the local limit, we recover the well-known coercivity result of the classical, local Dirichlet integrals
 formed by the conventional local derivative. Providing conditions to characterize
 this class of kernels is the main
contribution of this work.

Our main finding has a number of implications for nonlocal modeling. In particular, 
in the context of peridynamic correspondence material models (see a brief description in the next section
and  additional discussions in 
 \cite{silling1,flux} and \cite{bessa14,dtty16,ric,tr14}), we can conclude that 
there may not be any loss of stability if proper kernels are used for the correspondence formulation.
This is an encouraging news to the community that has found convenience in using
 peridynamic correspondence material models.  However, identifying the right kernels is crucial,
 in addition to requiring their consistency to the underlying physical processes and principles,  
 In simple terms,
the requirement on the kernels to assure the energy coercivity is that one should suitably enforce 
 stronger interactions  as the undeformed bond length
gets shorter. The specific form of the strengthening is given
 in the assumption and theorems in  section  \ref{sec:sta}, 
We also point out that, in contrast, the weakening of interactions   (or the lack of sufficient strengthening)
among materials points in closer proximity
 is likely going to cause a loss of coercivity of the correspondence formulation.
 In short, this means that the choice of nonlocal interaction kernels is a much more subtle issue for the
 correspondence theory, in 
 comparison with, for example, other nonlocal formulations such as the bond-based or state-based peridynamics of linear
 elasticity \cite{MeDu14e,MeDu14r}.
 
 To avoid technical complications, we only
 present the derivation in one space dimension for a scalar field, though the extension to multidimensional
 cases and vector fields are immediate, based on similar calculations  give in \cite{dt17}.  
 For simplicity, we also only consider periodic
 boundary conditions to avoid the discussion near physical boundary.  By utilizing this special geometry,
 we can carry out the needed mathematical derivations using simple Fourier analysis and elementary calculations. 
 The extensions to more general
 boundary conditions or more appropriate nonlocal volumetric constraints \cite{sys1,sys2} are more
 involved and will be left as future work. Further studies on the discrete level can also be
 carried out in a similar fashion but it is beyond the scope of the current work so that  in this work 
 we can focus on
 delivering a simple but importance message on nonlocal correspondence models on the continuum level.
 Indeed, as shown in \cite{non10,non11} and again in this work, delineating 
the effects on physical and continuum scales from those arising from numerical resolutions 
in order to better  investigate their interplay has proven to be a helpful strategy to validate nonlocal modeling
and simulations. Moreover, the conditions given later on the nonlocal interaction kernels rule out many popular choices
used in existing simulation codes and applications. Again, this is another instance related to nonlocal modeling
and simulations where
popular practices in the past may need to be
carefully scrutinized. As another example, simple mid-point quadrature and piecewise constant Galerkin finite element
approximations are both popular discretizations of bond-based peridynamics but they are not robust and run the
risk of converging to wrong physical solutions \cite{non10,non11}. Ensuring properly defined models and
convergent algorithms is particularly important to subjects like peridynamics since their main goal is to deal with
complex systems involving multiscale features, patterns and singular solutions potentially generated by inherent 
material instabilities so that one does not mix up model or numerical instability with the physically reality.

The remainder of the paper is organized as follows. In section \ref{sec:bgd}, some general background on the subject
is given.
We then provide more discussions on the nonlocal gradient operator for   one dimensional scalar field   in section \ref{sec:ngo}.
Equivalent formulations are presented in section \ref{sec:equ} to draw connections with nonlocal diffusion operators.
The main stability analysis is
in section \ref{sec:sta}. Different variants are considered in section \ref{sec:oth}.
Finally, 
in section \ref{sec:con}, we 
make some conclusions on the implications and generalizations.

\section{Background}
\label{sec:bgd}

In nonlocal mechanical models and nonlocal diffusion equations, the primary quantities used 
are often displacement and density variables. Drawing analogy to classical and local models, 
the notion of nonlocal gradient is indispensable as it is related naturally to concepts of nonlocal strain and
stress and nonlocal flux, see \cite{flux,silling1} and \cite{bessa14,dtty16,ric,tr14}. 

Indeed, it is a widely accepted practice of continuum mechanics to use classical local gradients to help defining
 constitutive relations that are central to the underlying mathematical models.  Taking the peridynamic model
in $\mathbb{R}^d$  as 
an example, if we use $\ub$ to denote the deformation vector field from
 $\mathbb{R}^d$ to $\mathbb{R}^d$, the
so-called nonlocal deformation gradient tensor $\Fb(\xb)=\mbG{\bf u}(\xb)$ is given by
\begin{equation}
\label{eq:ndg}
\mbG{\bf u}(\xb)=\left(\int_{ B_\delta(\xb)}
 \rho_\delta(|\yb -\xb   |)\,  \frac{\yb -\xb }{|\yb -\xb   |} \otimes
\, \frac{{\bf u}(\yb )-{\bf u}(\xb   )}{|\yb -\xb   |} \;d\yb\right) \mathbf{K}^{-1}(\xb) \;\;
\end{equation}
where $\delta>0$ represents the horizon measuring the range of nonlocal interactions,
  $\rho_\delta$ is a scalar function
   (micromodulus function as named in \cite{silling1}) representing the nonlocal interaction kernel, and $\Kb=\Kb(\xb)$ is
  the shape tensor defined by
  $$ 
  \mathbf{K}(\xb)=\int_{ B_\delta(\xb)}  \rho_\delta(|\yb -\xb   |)\,  \frac{\yb -\xb }{|\yb -\xb   |} \otimes
\, \frac{\yb -\xb }{|\yb -\xb   |} \;d\yb\, .
$$
It has been suggested that, corresponding to a well-defined local
constitutive model $\sigma=\sigma(\varepsilon)$ where $\sigma$ is the stress tensor and $\varepsilon$ the
strain tensor, one can formally derive a nonlocal peridynamic analog as   $\sigma=\sigma(\bar{\Fb})$ where
$\bar{\Fb}$ is the symmetric part of $\Fb$.
This in essence leads to the  peridynamic correspondence material models or correspondence theory for short \cite{silling2}.
More discussions and additional references in the context of peridynamics can be found in \cite{PD0,PD1,PD2,foster16,PD3,PD4,tr14}, among others.

We note that in more mathematical generality, a nonlocal gradient operator for vector field $\ub$ defined on
a domain $\Omega$ may take on the form of a second order tensor given by
$$
\mfG{\bf u}(\xb   ):= \lim_{\epsilon \to 0} \, \int_{\Omega\setminus B_\epsilon(\xb   )} \rho_\delta(|\yb -\xb   |)\,\mathsf{M}_\delta(\yb -\xb   ) \, \frac{{\bf u}(\yb )-{\bf u}(\xb   )}{|\yb -\xb   |} \;d\yb ,\;\;
$$
where  $\mathsf{M}_\delta$  is a 3rd-order odd  tensor, thanks to the Schwartz kernel theorem \cite{MeDu15}. The simpler version given in \eqref{eq:ndg} is nevertheless sufficient to serve the purpose of our discussion here.

Though the form given in \eqref{eq:ndg} is appealing and is formally  consistent to the classical deformation gradient
in the local limit with suitably normalized nonlocal interaction kernel $\rho_\delta$, 
there have been various issues concerning its use in the correspondence theory \cite{bessa16,silling2,tr14}. Similar issues 
have been noticed in other applications such as those involving particle discretizations \cite{bessa14}. Some of these issues
are related to numerical implementations but there are also fundamental limitations on the level
of the continuum models, see \cite{silling2}
for a recent study that provided a comprehensive summary on the topic. In particular, it has been made aware of the loss
of coercivity (stability) of the nonlocal energy functionals constructed explicitly  via the nonlocal deformation gradient.
We may use
Dirichlet integrals  as representative examples of energy functionals given by a quadratic energy density
corresponding to the linear elasticity. With the nonlocal deformation gradient $\Fb=\mbG\ub$, the nonlocal
Dirichlet integral is given by
\begin{equation}
\label{eq:dir}
E_\delta(\ub)=\int |\bar{\Fb}(\xb)|^2 d\xb=\int \left|\frac{\mbG\ub(\xb)+(\mbG\ub(\xb))^T}{2}
\right|^2 d\xb\,,
\end{equation}
which is an nonlocal analog of the usual Dirichlet integral defined for the local gradient as follows:
$$
E_0(\ub)=\int  \left|\frac{\nabla \ub (\xb)+(\nabla \ub (\xb))^T}{2}\right|^2 d\xb\, .
$$
The lack of coercivity in the nonlocal version is unfortunate as the the local version 
is well-known to be coercive subject to suitable boundary conditions or constraints.
Despite the existing issues and improved understanding, formulations based correspondence theory continue to
be popular among practitioners and, at the same time, they are also challenged by the community given the lingering debate
on the relevant controversies.
The latest attempt \cite{silling2} has suggested the addition  of a penalty term  that
 does provide the needed stability to the elastic energy on
the continuum level, however, the additional term to the elastic energy generically is not a null-Lagrangian, meaning that
the equilibrium solutions, for example, of the associated energy may be different from its original form without the penalty 
unless linear deformation fields are obtained.  Thus, the additional term does not vanish and its effect, unlike 
a null-Lagrangian,
 is present in general. 
A central question remains, that is, in what circumstances can one
ensure the coercivity of the original energy functionals, such as \eqref{eq:dir}, defined by the nonlocal deformation gradient.


\section{Nonlocal gradient operator and Dirichlet integral in 1D}
\label{sec:ngo}

To illustrate the key concepts,  we focus on a nonlocal gradient operator for a
scalar function field defined on a one dimensional periodic cell given by $\Omega=(0,1)$. 
 The simple setting
 allows us to more clearly present the central findings without much more
 tedious technical derivations for higher dimensional vector fields.
  It is expected that
 our approach works also for such a more general scenario.

To be more specific about the scalar nonlocal gradient operator  (or nonlocal first
derivative)  $\mbG$, we  consider suitably scaled kernels so that the local limit of  $\mbG$
as $\delta$ goes to zero recovers exactly the first order derivative $\frac{d}{d x}$ denoted by 
$\mGz$. 
A popular choice of the kernel $\rds$ is based on a rescaling $\rho_\delta(s)=\delta^{-1}\rho(s/\delta)$
with the following assumptions on $\rho$:
\beq\label{ker1}
\rho(-s)=\rho(s),\quad 
\left\{\begin{array}{ll}\rho(s)\geq 0,\;\; & \forall s\in (-1,1),\\[.1cm]
\rho(s)=0,\;\; & \forall s\notin (-1,1),
\end{array}\right.\quad\text{and}\quad
 \int_{-1}^{1} \rho(s) ds=1\,,
\eeq
with further assumptions on $\rho=\rho(s)$ to be discussed later.


Specializing \eqref{eq:ndg} for a one dimensional scalar field $u=u(x)$ that are assumed
to be square integrable and periodic with a periodic cell $\Od$, 
 we focus on the nonlocal gradient operators defined below:
\begin{eqnarray}\label{ngo}
\mbG u(x)  &=& \int_{-\delta}^\delta  \rho_\delta(|s|) \frac{u(x+s)-u(x)}{s}ds\\
&=& \int_{-\delta}^\delta  \rho_\delta(|s|) \frac{u(x)-u(x-s)}{s}ds\\
&=&\int_{-\delta}^\delta  \rho_\delta(|s|) \frac{u(x+s)-u(x-s)}{2s}ds\\
&=&\int_{0}^\delta  \rds \frac{u(x+s)-u(x-s)}{s}ds\,. \label{ngo_new}
\end{eqnarray}

Since $\rho_\delta=\rds$ can be seen as a density function defined
on $(-\delta,\delta)$, $\mbG$  is effectively a continuum weighted
 average of some discrete
first order difference operators up to the scale $\delta$.
If  $\rho_\delta =\rds$ gets localized  as $\delta\to 0$ and behaves like a 
Dirac-delta measure at the origin in the limit,  
we  may indeed see $\mGz=\frac{d}{dx}$ as the formal local limit of
$\mbG$. 

Detailed studies of operators defined by \eqref{eq:ndg} and the specialized form \eqref{ngo}
ad well as their local limits  are the subject of recently developed 
nonlocal vector calculus, see \cite{sys1} for formal derivations 
and \cite{MeDu15} and \cite{dyz16} for more extended functional analysis.  
Nonlocal analog
of integration by parts formula has also been rigorously derived \cite{dyz16,MeDu15}.

The one dimensional version of the Dirichlet integral associated with $\mbG$ and its local form can
be written as
\begin{equation}
\label{eq:do}
E_\delta(u)=\int_\Od  |\mbG u (x)|^2 dx\,\qquad
E_0(u)=\int_\Od  |\nabla u (x)|^2 dx\,,
\end{equation}
for any scalar periodic function $u$ with the periodic cell $\Od$.

Due to the periodic boundary condition, a constraint is
needed to determine the constant shift in the deformation field.
We thus only consider those functions that satisfy
\begin{equation}\label{nlceq1}
  \int_\Od u(x) dx =0 \,.
\end{equation}
 
There are also the one sided versions (see \cite{dtty16,dyz16} for related discussions)
\begin{equation}\label{ngopm}
\mathbb{G}_\delta^\pm u(x)=\pm 2 \int_0^\delta \rds \frac{u(x\pm s)-u(x)}{s}ds\,.
\end{equation}
Similarly, nonlocal diffusion operators that are analogs of the classical diffusion
or second derivative operators have also been a subject of extensive study \cite{sys1,sys2,MeDu14e,MeDu14r,MeDu15a}.
The connections between these operators are to be further discussed in the next section.

\section{Equivalent formulation of Dirichlet integral}
\label{sec:equ}

In the one dimensional case, it is particularly convenient to connect the nonlocal Dirichlet integral given by $E_\delta(u)$
in \eqref{eq:do} with another popular nonlocal version of $E_0(u)$:
\begin{equation}
\label{eq:dod}
\hat{E}_\delta(u)=\int_\Od \int \omd(s) \left|\frac{u (x+s)-u(x)}{s}\right|^2 ds dx\,.
\end{equation}
In the context of peridynamics, $\hat{E}_\delta(u)$ may be seen as the linearized bond-based elastic energy
associated with a nonlocal interaction kernel $\omd=\omd(s)$.

We now discuss the relations between \eqref{eq:do} and  \eqref{eq:dod}.
Let $D_{\delta}$ denote $(-\delta,\delta)^2$ and $H_\delta=\Od\times D_{\delta}$. 
Note first that in the principal value sense of the integrals,  we have
$$
\mbG u(x)  = \int_{-\delta}^\delta  \rho_\delta(|s|) \frac{u(x+s)}{s}ds
= -\int_{-\delta}^\delta  \rho_\delta(|s|) \frac{u(x-s)}{s}ds
$$
Thus,
\begin{eqnarray*}
E_\delta(u)&=&\int_\Od  |\mbG u (x)|^2 dx\\
   &= & - \int_{H_\delta}
 \frac{ \rho_\delta(|s|)   \rho_\delta(|t|)}{st}
  u(x+s)u(x-t)  ds dt dx\\
     &= & - \int_{H_\delta}
 \frac{ \rho_\delta(|s|)   \rho_\delta(|t|)}{st}
  u(y+s+t)u(y)  ds dt dy
   \end{eqnarray*}
  where we have done a shift in the variable $y=x-t$ but
  the domain of integration remains the same due to
    the periodicity of $u=u(x)$ in $x$.
From the above, we then notice that, after switching $y$ back to $x$, 
  \begin{eqnarray*}
E_\delta(u)       &= & - \int_{H_\delta}
 \frac{ \rho_\delta(|s|)   \rho_\delta(|t|)}{st}
  u(x+s+t)u(x)  ds dt dx\\
 &= &\int_{H_\delta}
 \frac{ \rho_\delta(|s|)   \rho_\delta(|t|)}{2st}
 [u(x+s+t)^2 -2 u(x+s+t)u(x)+  u(x)^2] ds dt dx\\
   &=&  \int_{H_\delta}
 \frac{ \rho_\delta(|s|)   \rho_\delta(|t|)}{2st}
  \left[u(x+s+t) - u(x)\right]^2
   ds dt dx\,.
   \end{eqnarray*}
Now, consider the transformation $a=s+t$ and $b=t-s$ wit $x$ unchanged,
     we use $\hat{H}_\delta$ to denote the region in the
    new variables obtained
    from the transformation of $H_\delta$, then
  \begin{eqnarray*}
E_\delta(u)      &=&
        \int_{\hat{H}_\delta}
  \rho_\delta\left(\left|\frac{a-b}{2}\right|\right)   \rho_\delta\left(\left|\frac{a+b}{2}\right|\right)
 \frac{2a^2}{a^2-b^2}
  \left| \frac{u(x+a)-u(x)}{a}\right|^2
   da db dx\\
&=&    \int_\Od \int_{-2\delta}^{2\delta} \omd(|a|)
  \left| \frac{u(x+a)-u(x)}{a}\right|^2
   da dx\,,
     \end{eqnarray*}
     where 
$$
\omd(a)=\omd(|a|)=
\int_{|a|-2\delta}^{-|a|+2\delta}
  \rho_\delta\left(\left|\frac{a-b}{2}\right|\right)   \rho_\delta\left(\left|\frac{a+b}{2}\right|\right)
 \frac{2a^2}{a^2-b^2}db 
 $$
is a kernel function supported in $a\in (-2\delta,2\delta)$. 

  This implies that $E_\delta(u)$ is equivalent to  $\hat{E}_\delta(u)$ with the kernel
$\omd=\omd(|a|)$. Consequently, the corresponding nonlocal diffusion operator (that is, the variation of
the energy,  or the bond force operator in peridynamics) is given by
$$- \mathcal{L}_\delta u(x)=\int_{-2\delta}^{2\delta} \omd(s) \frac{u(x+s)-2u(x)+u(x-s)}{s^2} ds\,.
$$
As we have elucidated in our earlier works, the nonlocal operator $\mathcal{L}_\delta$ may be
viewed as a nonlocal continuum weighted average of the classical second order central difference 
operator with the kernel $\omd$ serving as the weight function.
  A direct calculation shows that
$$\int_{-2\delta}^{2\delta} \omd(a)da = 1,$$
which gives the correct normalization condition on $\omd$. 
However, it can also be seen that
$$\int_{-2\delta}^{2\delta} \frac{\omd(a)}{a^2}da=0,$$
which implies that the nonlocal interaction kernel is a sign-changing one. That is, for example,
in the context of linear bond-based peridynamics,  we get
 both repulsive and attractive bond forces. 
 Naturally, repulsive interaction with a positive sign of kernel more likely yields coercivity and stability.
 Having attractive  interactions may cause the loss of coercivity but this is not always the case.
 In \cite{MeDu13}, well-posedness of linear
 bond-based peridynamics with a sign changing kernel has been established. In essence, as long as
 the repulsive effects are dominant, we could still expect a well-defined nonlocal model. 
 
 We take the moment to consider a couple of properties of $\omd$ in connection with
 $\rho_\delta$.
 First, we note that if $\rho_\delta$ is taken to be rescaled from a horizon ($\delta$) 
 independent
 kernel $\rho(s)$, that is
 $\rho_\delta(s)=\delta^{-1}\rho(s/\delta)$ with $\rho$ satisfying \eqref{ker1}. Then,
by a change of variables $a=\tilde{a}\delta$ and $b=\tilde{b}\delta$, we have 
 \begin{equation}
 \label{omega}
\omd(a)=\frac{1}{\delta}
\int_{|\tilde{a}|-2}^{-|\tilde{a}|+2}
    \rho(|\frac{\tilde{a}-\tilde{b}}{2}|)   \rho(\left|\frac{\tilde{a}+\tilde{b}}{2}\right|)
 \frac{2\tilde{a}^2}{\tilde{a}^2-\tilde{b}^2}d\tilde{b} =\frac{1}{\delta}\omega_1(\frac{a}{\delta})\,.
 \end{equation}
 This means, not surprisingly, that $\omd$ is also rescaled from a kernel $\omega_1$ symmetrically defined on $(-1,1)$. 
Next, we 
 note that it is easy to see
 $$\int_{-2\delta}^{2\delta} \frac{|\omd(a)|}{a^2}da\leq 
 \left( \int_{-\delta}^{\delta} \frac{\rds}{|s|}ds \right)^2\,,
 $$
 which tells us in particular that the singular behavior of $\rho_\delta$ at the origin
 likely controls the singularity of $\omd$ near the origin. Clearly, if $|s|^{-1}\rds$ is integrable,
 then so is $a^{-2}|\omd(a)|$. This in turn implies that  $\mbG$ and the associated nonlocal
 diffusion operator $\mathcal{L}_\delta$ are bounded operators on the space of square integrable functions.
 While kernels with integrable  $|s|^{-1}\rds$ and  $a^{-2}|\omd(a)|$ are among the popular choices in
 applications, they are not necessarily good choices, as shown later,  for correspondence formulations
 like \eqref{eq:do} are to be adopted.

 In the following, we provide more detailed calculations to give some explicit
 conditions on $\rho_\delta$ under which the coercivity of \eqref{eq:do} can be assured. The 
 periodic setting allows us to use Fourier
 analysis, a very convenient and frequently used tool in studies of nonlocal models,
 see for example \cite{non9}. The type of calculations involved is similar to dispersion
 analysis, see for instance \cite{bessa16,silling1,WA05} for studies related to peridynamics.
 

\section{Stability of nonlocal Dirichlet integral}
\label{sec:sta}

Let us first clarify the stability or coercivity that we refer to, namely, we define
a function space $V_\delta$ that is the completion of $C^\infty$ periodic functions
with mean zero subject to the nonlocal norm
$$\|u\|_\delta=(E_\delta(u)+\int_\Od |u(x)|^2 dx)^{1/2}.$$
The coercivity (or variational stability) of the Dirichlet integral $E_\delta(u)$ refers to the
fact that over the space $V_\delta$, we have a positive constant $C>0$, such that
$$E_\delta(u)\geq C \|u\|_\delta^2,\quad \forall u\in V_\delta.$$
Obviously, this can be seen as a consequence of the so-called nonlocal Poincar\'{e} inequality \cite{MeDu14e,MeDu14r}:
there exists a constant $c>0$ such that 
\beq
\label{eqn:pi}
E_\delta(u) = \int_\Od  |\mbG u (x)|^2 dx \geq c \int_\Od |u(x)|^2 dx
,\quad \forall u\in V_\delta.
\eeq

We note that the argument presented in \cite[Proposition 2]{silling2} for the failure of stability of $E_\delta(u)$
was based the choice of an increment in the deformation field by a Dirac-delta point measure. Such a choice
is not feasible in the space   $V_\delta$. 
 We now attempt to specify some conditions on the kernel $\rho_\delta$ so that \eqref{eqn:pi} 
can be verified and thus leading to the stability of $E_\delta(u)$.
We take advantage of the periodicity of $u$ to adopt Fourier analysis.

Under periodic conditions and the constraint \eqref{nlceq1}, 
 we could write $u$ in terms of their Fourier series, namely, 
\[
u(x) = \sum_{k=1}^\infty \widehat{u}(k) e^{i2\pi kx} \,,
\quad
\text{ where }
\quad
\widehat{u}(k)  =\int_\Od u(x)  e^{-i 2\pi kx} dx  \;.
\]

The rest of the technical derivations is focused on one dimensional scalar fields. For
the vector field case in high dimensions, we refer to \cite{dt17}.  Let us first present
the so-called Fourier symbols of the operator $\mbG$.

\begin{lem}
\label{bform}
The Fourier transform of the nonlocal gradient operator $\mbG$ is given by
\begin{align}
\widehat{\mbG u} (k)=i b_\del(k) \widehat{u}(k)
\end{align}
where  $b_\del(k) $ are the Fourier symbols  of $\mbG$ given by
\begin{align}
b_\del(k)  = 2 \int_0^\delta \frac{\rds}{s}\sin(2\pi ks) ds
 \,. \label{coef_b}
\end{align}
\end{lem}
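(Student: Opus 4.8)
The plan is to compute the Fourier coefficient of $\mbG u$ directly from the most convenient of the equivalent representations of $\mbG$ given in the excerpt, namely the antisymmetrized form in \eqref{ngo_new},
$$
\mbG u(x)=\int_{0}^\delta \rds\,\frac{u(x+s)-u(x-s)}{s}\,ds\,.
$$
First I would substitute the Fourier series $u(x)=\sum_{k\ge 1}\widehat u(k)e^{i2\pi kx}$ and interchange the sum with the integral in $s$ (justified because $\rds/s$ is integrable on $(0,\delta)$ for the admissible kernels, or alternatively by working with finite truncations of the series and passing to the limit). This turns the $k$-th mode $\widehat u(k)e^{i2\pi kx}$ into
$$
\widehat u(k)e^{i2\pi kx}\int_0^\delta \frac{\rds}{s}\left(e^{i2\pi ks}-e^{-i2\pi ks}\right)ds
=\widehat u(k)e^{i2\pi kx}\cdot 2i\int_0^\delta\frac{\rds}{s}\sin(2\pi ks)\,ds\,.
$$
Reading off the coefficient of $e^{i2\pi kx}$ and comparing with $\widehat{\mbG u}(k)=i b_\del(k)\widehat u(k)$ yields exactly the claimed formula $b_\del(k)=2\int_0^\delta (\rds/s)\sin(2\pi ks)\,ds$, and in particular shows $b_\del(k)$ is real, so the factor $i$ carries all the imaginary part — consistent with $\mbG$ being a nonlocal analog of $d/dx$, whose symbol is $2\pi i k$.

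As a consistency check I would verify directly that $\mbG$ maps the dense subspace of smooth mean-zero periodic functions into $L^2$ and commutes with translations, so it is indeed diagonalized by the Fourier basis and the computation above is legitimate; one can also cross-check the formula by instead plugging the Fourier series into the symmetric difference quotient form \eqref{ngo} and seeing that the $\cos$ contributions cancel by the oddness of $1/s$ while the $\sin$ contributions add up, reproducing the same $b_\del(k)$.

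The main (and essentially only) obstacle is the interchange of the infinite sum and the integral: one must make sure that $\int_0^\delta |\rds/s|\,ds<\infty$ is not actually needed in full, since the paper explicitly wants to allow kernels for which $|s|^{-1}\rds$ is \emph{not} integrable. The resolution is that for each fixed mode the relevant integrand is $(\rds/s)\sin(2\pi ks)$, which is absolutely integrable near $s=0$ because $\sin(2\pi ks)=O(s)$ kills the singularity; hence $b_\del(k)$ is well-defined for every $k$ even when $\int_0^\delta \rds/s\,ds=+\infty$. For the interchange with the sum one works on a dense class (finite Fourier sums, or smooth $u$) where everything is manifestly finite, obtains the identity there, and then notes that both sides define bounded operations on $V_\delta$ — or simply states the lemma for $u$ in that dense class, which is all that is needed in the subsequent stability analysis. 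This is the only point requiring care; the rest is the elementary trigonometric identity $e^{i\theta}-e^{-i\theta}=2i\sin\theta$.
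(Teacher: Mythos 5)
Your proof is correct and follows essentially the same route as the paper: use the antisymmetrized form \eqref{ngo_new}, exploit that translations act diagonally on Fourier modes via $e^{i2\pi k(x\pm s)}=e^{\pm i2\pi ks}e^{i2\pi kx}$, and invoke $e^{i\theta}-e^{-i\theta}=2i\sin\theta$. The only cosmetic difference is that the paper computes the Fourier coefficient $\widehat{\mbG u}(k)=\int_\Od \mbG u(x)e^{-i2\pi kx}\,dx$ directly and swaps the $x$ and $s$ integrals (a finite Fubini application), whereas you expand $u$ in its Fourier series and interchange the infinite sum with the $s$ integral; your additional remarks about why $b_\delta(k)$ is finite mode-by-mode even when $\rds/s$ is not integrable, and about working first on a dense class of smooth $u$, address this extra technical step correctly and are in the spirit of the paper's (unstated) justification.
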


\begin{proof}
Using \eqref{ngo_new} and the periodicity of $u$, we have
\[
\begin{split}
\widehat{\mbG u} (k) &= \int_\Od \mbG u(x) e^{-i 2\pi kx} dx  \\
&= \int_\Od  \int_{0}^\delta  \rds \frac{u(x+s)-u(x-s)}{s} e^{-i 2\pi kx} ds dx \\
&= \int_\Od  \int_{0}^\delta  \frac{\rds}{s}(e^{i 2\pi ks}-e^{-i 2\pi ks} )  u(x) e^{-i 2\pi kx}  ds dx  \\
&= 2 i \int_0^\delta \frac{\rds}{s}\sin(2\pi ks) ds \int_\Od  u(x) e^{-i 2\pi kx} dx \\
&= i b_\del(k) \widehat{u}(k)\,.
\end{split} 
\]
\end{proof}

The following
simple fact on the sine Fourier coefficient, a special form of Riemann-Stieltjes type integrals,
is useful to our discussion.

\begin{lem} \label{lem:positivity}
Given a measurable, non-negative and non-increasing function $g=g(x)$ with $xg(x)$ integrable, we have
\beq
\label{fou1}
\int_0^{2\pi} g(x) \sin(x) dx \geq 0\,
\eeq
with the equality holds only for $g$ being a constant function. Consequently, for any 
 $h>0$ and $a>0$, we have
\beq
\int_0^h g(x) \sin(ax) dx \geq 0\,,
\eeq
with the equality holds only for $g$ being a constant function (and with value zero if $ha$ is not an integer multiple of $2\pi$).
\end{lem}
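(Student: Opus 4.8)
The plan is to prove the first inequality \eqref{fou1} by a pairing argument that exploits the sign structure of $\sin$ on $(0,2\pi)$, and then derive the second inequality by a simple rescaling. For \eqref{fou1}, I would split the integral at $x=\pi$, where $\sin$ changes sign, writing
\[
\int_0^{2\pi} g(x)\sin(x)\,dx = \int_0^\pi g(x)\sin(x)\,dx + \int_\pi^{2\pi} g(x)\sin(x)\,dx .
\]
In the second integral I substitute $x\mapsto 2\pi - x$, so that $\sin(x)\mapsto -\sin(x)$ and the integral becomes $-\int_0^\pi g(2\pi - x)\sin(x)\,dx$. Combining the two pieces gives
\[
\int_0^{2\pi} g(x)\sin(x)\,dx = \int_0^\pi \bigl(g(x) - g(2\pi - x)\bigr)\sin(x)\,dx .
\]
Now on $(0,\pi)$ we have $\sin(x)\ge 0$, and since $g$ is non-increasing and $x \le 2\pi - x$ for $x\in(0,\pi)$, we have $g(x) \ge g(2\pi - x)$. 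Hence the integrand is non-negative and \eqref{fou1} follows. The integrability of $xg(x)$ is what guarantees all these integrals are well-defined (near $x=0$, $\sin x \sim x$, so $g(x)\sin x$ behaves like $xg(x)$, which is integrable by hypothesis).

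For the equality case, note that equality forces $\bigl(g(x)-g(2\pi-x)\bigr)\sin(x) = 0$ for a.e.\ $x\in(0,\pi)$, hence $g(x) = g(2\pi - x)$ for a.e.\ $x\in(0,\pi)$. Since $g$ is non-increasing on $(0,2\pi)$ and symmetric about $\pi$ in this sense, it must be a.e.\ constant on $(0,2\pi)$: indeed, for $0 < x_1 < x_2 < \pi$ we have $g(x_1)\ge g(x_2) \ge g(2\pi - x_2) = g(x_2)$ wait — more carefully, monotonicity gives $g(x_1) \ge g(x_2)$, and also $g(2\pi - x_1) \le g(2\pi - x_2)$; combined with $g(x_i) = g(2\pi - x_i)$ this yields $g(x_1) \le g(x_2)$, so $g(x_1) = g(x_2)$. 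Thus $g$ is constant on $(0,\pi)$ and, by the symmetry relation, on $(0,2\pi)$.

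For the second, rescaled statement, given $h>0$ and $a>0$, substitute $y = ax$ to obtain $\int_0^h g(x)\sin(ax)\,dx = a^{-1}\int_0^{ha} g(y/a)\sin(y)\,dy$. The function $y\mapsto g(y/a)$ is still non-negative and non-increasing. If $ha \le 2\pi$, then $\sin(y)\ge 0$ throughout $(0,ha)$ and the integral is trivially non-negative; if $ha > 2\pi$, write $ha = 2\pi m + r$ with $m\ge 1$ an integer and $0\le r < 2\pi$, split the range into $m$ full periods plus a remainder $(2\pi m, 2\pi m + r)$, apply \eqref{fou1} (suitably translated) on each full period using monotonicity of $g(y/a)$, and handle the leftover piece by the same split-at-the-sign-change trick; in fact on each interval $(2\pi j, 2\pi(j+1))$ the same pairing $y \mapsto 2\pi(j+1) - y + 2\pi j$... — more simply, the periodicity argument reduces each full-period contribution to a term of the form $\int_0^{2\pi}\bigl(\tilde g(x) - \tilde g(2\pi - x)\bigr)\sin x\,dx \ge 0$ with $\tilde g$ the appropriate translate, which is non-increasing. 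The equality discussion follows the same way, with the extra observation that if $ha$ is not an integer multiple of $2\pi$ the remainder piece, being a strictly monotone-forcing term, pins the constant to zero.

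\textbf{Main obstacle.} The genuinely delicate point is the equality characterization in the rescaled statement when $ha$ is not a multiple of $2\pi$: one must argue that a positive constant $g$ cannot achieve equality because the leftover partial-period integral $\int_0^r c\sin(y)\,dy = c(1-\cos r)$ is strictly positive for $r\in(0,2\pi)$. Bookkeeping the periods and signs carefully — and making sure the monotonicity hypothesis is correctly inherited by each translated copy of $g$ — is the part that requires the most care, though none of it is deep.
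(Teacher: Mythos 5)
Your proof is correct in its core approach and is essentially the same as the paper's: you establish \eqref{fou1} by pairing $x$ with $2\pi-x$ and using monotonicity (the paper instead pairs $x$ with $x+\pi$; both are one-line symmetry reductions of the same kind), and you handle the rescaled statement by a change of variables, zero extension, and period-by-period application of the base case. One small slip to patch: for $\pi<ha\le 2\pi$ the claim that ``$\sin(y)\ge 0$ throughout $(0,ha)$'' is false, since $\sin$ is negative on $(\pi,ha)$, so that range is not trivial; it requires the same zero-extension/pairing device you already invoke for the leftover piece when $ha>2\pi$ (equivalently, as the paper does, extend $g$ by zero beyond $h$ and apply \eqref{fou1} directly on $(0,2\pi)$).
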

\begin{proof} The inequality \eqref{fou1} follows immediately from the observation that
\[\int_0^{2\pi} g(x) \sin(x) dx  = \int_0^{\pi} [g(x)-g(x+\pi)] \sin(x) dx \geq 0.
\]
By the non-increasing property, we see that the  equality holds only for $g$ being a constant function. 
The more general case follows by applying a change of variable and taking a zero extension of $g$ outside $(0,h)$ to
cover complete periods of the scaled sine function.
\end{proof}

The expressions of  Fourier symbols $\{b_\del(k)\}$ given in Lemma \ref{bform}
have a number of immediate consequences upon simple observations.
First, if as $\delta\to 0$ we have $2\rds$, which is a density on $(0,\delta)$, approaches to the
Dirac-delta measure at the origin, then for any given $k$ and $\delta\to 0$,
$$b_\del(k)  = 2 \int_0^\delta \frac{\rds}{s}\sin(2\pi ks) ds = 2\pi k \int_0^\delta 2 \rds \frac{\sin(2\pi ks)}{2\pi ks} ds
\to 2\pi k\, ,
$$
which again gives the obvious connection of the nonlocal derivative to the local derivative.

Another immediate consequence, coming from Lemma \ref{lem:positivity},  is
that  $b_\del(k)$ stays positive for any finite wave number $k$ 
if $s^{-1} \rds$ is non-increasing and not a constant. This provides a sufficient condition to assure $b_\del(k)\neq 0$. 
While not necessary, violating such a condition may result in undesirable effects. 
We can see from the simple example that $s^{-1}\rds$ being a constant function leads to
$b_\del(k)=0$ for some wave number $k$. Indeed, for such a choice of $\rho_\delta$,
 $$b_\del(k)= c \int_0^\delta \sin(2\pi ks) ds = \frac{c}{2\pi k}(1-\cos(2\pi k \del)),$$ 
 which is zero if $k\delta$ is an integer. The Fourier modes corresponding to such wave numbers 
 may be interpreted as the so-called {\em zero-energy} modes, reinforcing
 the understanding that there are inherent instabilities associated with the correspondence theory at the continuum
 level \cite{silling2}. However, such zero-energy modes do not exist for proper choices of the kernel  described below.
  
 Naturally, the  coercivity of the Dirichlet integral is not only concerned with $b_\del(k)$ for finite $k$ but also 
 its asymptotic and uniform behavior as $k\to\infty$. The latter is a feature of the continuum
 theory that allows the wave number going to infinity, though, unfortunately, it has not been given adequate attention
 in the existing literature
 before.
  From the Riemann-Lebesgue lemma, we know that $b_\del(k)\to0$ as $k\to\infty$ if $s^{-1}\rds$ is integrable.
  Thus, requiring $s^{-1}\rds$ being non-integrable becomes necessary, a fact that should be taken into
  consideration seriously when working with the correspondence theory.
  
 Note that requiring  $s^{-1}\rds$ both  non-increasing and non-integrable, 
 in order to get the non-degeneracy  of $b_\del(k)$ for finite $k$ and as  $k\to\infty$, 
 would lead to a
restricted growth of $\rds$ near the origin. That is, as the undeformed bond length
 approaches zero, the prescription of the nonlocal interaction is limited.  A most general
discussion about necessary and sufficient conditions on the kernel to assure the coercivity is beyond the scope of
this work. Instead, to keep the mathematical discussions to a minimal level, 
we present a result on the coercivity of the nonlocal Dirichlet integral under the following sufficient conditions.

\begin{assu} \label{assu:frackernel}
We assume that the kernel $\rho_\delta$ is given by a rescaled kernel $\rho_\delta(s)=\rho(s/\delta)/\delta$  with
 $\rho=\rho(s)$ satisfying \eqref{ker1} and the following:\\
{\rm 1)} ${\rho(s)}/{s}$ is non-increasing for $s\in (0,1)$;\\
{\rm 2)} $\rho(s)$ is of fractional type in at least a small neighborhood of origin, namely, 
 there exists some $\ep>0$ such that for $s\in (0,\ep)$ we have
\begin{equation} \label{frackernel}
\rho(s)= \frac{c}{s^{\alpha}} \,,
\end{equation}
for some constant $c>0$ and $\alpha\in(0,1)$.
\end{assu}

Note that the condition associated with \eqref{frackernel} automatically implies the non-integrability of  $s^{-1}\rds$ in
the one space dimensional case under consideration.

\begin{thm}
  \label{thm:poincare}
Under the Assumption \ref{assu:frackernel}, the nonlocal Poincar\'e inequality holds, namely, there exists a constant $C$
 independent of $u$ such that
\beq \label{eq:poincare}
\| u\|^2_{L^2(\Od)}\leq C E_\delta(u)\,.
\eeq
Moreover, $C$ is also independent of $\delta$ as $\delta\to0$.
\end{thm}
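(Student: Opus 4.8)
# Proof Proposal

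The plan is to work entirely on the Fourier side, using Lemma~\ref{bform} to reduce the Poincar\'e inequality \eqref{eq:poincare} to a uniform lower bound on the symbols. By Parseval, $E_\delta(u)=\sum_{k\geq 1} |b_\del(k)|^2 |\widehat u(k)|^2$ and $\|u\|_{L^2}^2 = \sum_{k\geq 1}|\widehat u(k)|^2$, so it suffices to exhibit a constant $C>0$, independent of $\delta$ (as $\delta\to0$), such that $|b_\del(k)| \geq 1/\sqrt{C}$ for every integer $k\geq 1$. I would first exploit the rescaling: substituting $s = \delta t$ in \eqref{coef_b} with $\rho_\delta(s)=\rho(s/\delta)/\delta$ gives
\[
b_\del(k) = 2\int_0^1 \frac{\rho(t)}{t}\sin(2\pi k\delta t)\,dt,
\]
so that $b_\del(k)$ depends on $k$ and $\delta$ only through the product $\omega := 2\pi k \delta$. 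Thus the whole question becomes: is $F(\omega) := \int_0^1 \frac{\rho(t)}{t}\sin(\omega t)\,dt$ bounded below by a positive constant uniformly over $\omega \geq 2\pi\delta$? Since we want uniformity as $\delta\to0$, we really need $\inf_{\omega>0} F(\omega) > 0$, or at least a lower bound that does not degenerate.

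The argument then splits according to the size of $\omega$. For $\omega$ in any fixed bounded interval $(0, M]$, Lemma~\ref{lem:positivity} applied with $g = \rho(t)/t$ (non-increasing by Assumption~\ref{assu:frackernel}(1), non-negative, and with $t\cdot g(t)=\rho(t)$ integrable) gives $F(\omega) > 0$; since $\rho/t$ is not constant (it is $c\,t^{-1-\alpha}$ near $0$), the inequality is strict, and a compactness/continuity argument on $[\,\varepsilon', M\,]$ for small $\varepsilon'$ combined with the small-$\omega$ asymptotics $F(\omega)\sim \omega\int_0^1 \rho(t)\,dt \cdot(\text{const})$ — more precisely $F(\omega)/\omega \to \int_0^1\rho(t)\,dt = \tfrac12$ as $\omega\to0^+$ — shows $F$ stays bounded below on $(0,M]$ by a positive constant, with the small-$\omega$ behavior matching the local limit $b_\del(k)\to 2\pi k$. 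The genuinely delicate regime is $\omega\to\infty$: this is where the Riemann--Lebesgue heuristic would kill $F(\omega)$ if $\rho/t$ were integrable, and it is exactly here that the fractional condition \eqref{frackernel} is indispensable. I would write $\rho(t)/t = c\,t^{-1-\alpha}$ on $(0,\varepsilon)$ plus a bounded, integrable remainder on $(\varepsilon,1)$; the remainder's contribution to $F(\omega)$ tends to $0$ by Riemann--Lebesgue, while the singular part contributes
\[
c\int_0^\varepsilon t^{-1-\alpha}\sin(\omega t)\,dt = c\,\omega^{\alpha}\int_0^{\varepsilon\omega} \tau^{-1-\alpha}\sin(\tau)\,d\tau \;\longrightarrow\; c\,\omega^{\alpha}\!\int_0^\infty \tau^{-1-\alpha}\sin\tau\,d\tau,
\]
and the improper integral $\int_0^\infty \tau^{-1-\alpha}\sin\tau\,d\tau$ is a classical convergent integral with a strictly positive value (it equals $\Gamma(-\alpha)\sin(-\pi\alpha/2)>0$ for $\alpha\in(0,1)$, or one simply checks positivity via the alternating-blocks argument as in Lemma~\ref{lem:positivity}). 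Hence $F(\omega)\sim c'\,\omega^\alpha \to +\infty$, so in fact $|b_\del(k)|$ grows like $\delta^{-\alpha}(k\delta)^\alpha = k^\alpha$ for large $k$ — far from degenerating, it diverges.

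Assembling the two regimes: on the bounded part $F$ is bounded below by a positive constant $c_1$, and on the unbounded part $F(\omega)\geq c_2 > 0$ for $\omega$ large, so $m:=\inf_{\omega>0}F(\omega)>0$ and therefore $|b_\del(k)| = 2F(2\pi k\delta) \geq 2m$ for all $k\geq1$ and all $\delta>0$. Taking $C = 1/(2m)^2$ yields $\|u\|_{L^2(\Od)}^2 = \sum_k |\widehat u(k)|^2 \leq C\sum_k|b_\del(k)|^2|\widehat u(k)|^2 = C\,E_\delta(u)$, with $C$ independent of $\delta$, which is precisely \eqref{eq:poincare}. The main obstacle I anticipate is making the large-$\omega$ analysis fully rigorous and quantitative: one must control the $\varepsilon$-tail of the singular integral uniformly, confirm the remainder really is $L^1$ (it is bounded on $(\varepsilon,1)$ by monotonicity), and pin down that $\int_0^\infty \tau^{-1-\alpha}\sin\tau\,d\tau>0$ — the last point is the crux, since without a sign guarantee there the whole lower bound could fail; I would handle it by the same pairing-of-half-periods trick used in Lemma~\ref{lem:positivity}, noting $\tau^{-1-\alpha}$ is positive and strictly decreasing on $(0,\infty)$.
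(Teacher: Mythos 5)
Your overall strategy --- pass to Fourier symbols, rescale to a $\delta$-free kernel, and analyze small/intermediate/large frequency regimes with Lemma~\ref{lem:positivity} as the workhorse --- is exactly the paper's approach, and your large-$\omega$ analysis using the singular part $c\,t^{-1-\alpha}$ to force growth $\sim\omega^\alpha$ is correct in spirit. But there is a genuine computational error that breaks the argument. The rescaling $s=\delta t$ with $\rho_\delta(s)=\rho(s/\delta)/\delta$ in \eqref{coef_b} gives
\[
b_\delta(k)\;=\;\frac{2}{\delta}\int_0^1\frac{\rho(t)}{t}\,\sin(2\pi k\delta t)\,dt\;=\;\frac{2}{\delta}\,F(2\pi k\delta),
\]
whereas you wrote $b_\delta(k)=2F(2\pi k\delta)$, dropping the factor $1/\delta$. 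You can spot the inconsistency from your own remark that the local limit should be $b_\delta(k)\to 2\pi k$: with your formula $2F(2\pi k\delta)\sim 2\cdot\pi k\delta\to 0$, not $2\pi k$; the missing $1/\delta$ is exactly what restores the local limit.

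This is not cosmetic. Because of it, your reduction ``$b_\delta(k)$ depends on $(k,\delta)$ only through $\omega=2\pi k\delta$'' is false, and the claim that the proof hinges on $\inf_{\omega>0}F(\omega)>0$ collapses: since $F(\omega)\sim\omega/2\to 0$ as $\omega\to 0^+$, that infimum is $0$, so the concluding step ``$m:=\inf_{\omega>0}F(\omega)>0$ and hence $|b_\delta(k)|\ge 2m$'' cannot be made. Equivalently, your assertion that ``$F$ stays bounded below on $(0,M]$ by a positive constant'' directly contradicts the asymptotics $F(\omega)/\omega\to 1/2$ that you state in the same sentence. With the correct $1/\delta$ in place, the small-$a$ regime ($a=2\pi k\delta<1$) is instead handled, as in the paper, by the elementary bound $\sin(as)\ge as-(as)^3/6$, which yields $b_\delta(k)\gtrsim a/\delta=2\pi k\ge 2\pi$; the intermediate compact range $a\in[1,2\pi/\epsilon]$ and the large-$a$ range then pick up extra factors $1/\delta$ and $\delta^{\alpha-1}\ge 1$ respectively. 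So the three-regime skeleton you built is the right one, but you must carry the $1/\delta$ through each regime rather than reducing to a single $\delta$-independent function of $\omega$.
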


\begin{proof}
Under periodic conditions, \eqref{eq:poincare} is equivalent to showing that $b_\del(k)$ is uniformly bounded from below. 
In fact,  we can write
\[
\begin{split}
b_\del(k)= 2 \int_0^\delta \frac{\rds}{s}\sin(2\pi ks) ds = \frac{2}{\del}   \int_0^1 \frac{\rho(s)}{s}\sin( a s ) ds \\
\end{split}
\]
where $a= 2\pi k\delta$. Notice that the above quantity is positive with the assumption that $\rho(s)/s$ is non-increasing
and not a constant.

For $a<1$, we use 
$$\sin(a s) \geq as -\frac{(as)^3}{6}$$ to get
\[
b_\del(k) \geq \frac{2a}{\delta}\int_0^1 \rho(s) ds - \frac{a^3}{3\delta} \int_0^1 \rho(s) s^2 ds \geq \frac{5 a }{3 \delta}=\frac{10\pi}{3} k
\geq \frac{10\pi}{3}  \,.
\]
For $a\in [1, {2\pi}/{\ep}]$, where $\ep$ is the parameter defined in Assumption \ref{assu:frackernel}, we know that 
$$ 2 \int_0^1 \frac{\rho(s)}{s} \sin(a s) ds >0$$ 
is a continuous function of $a$,  thus it has a lower bound $ C_1$, then
\[
b_\del(k) \geq  \frac{ C_1}{\del} \geq  C_1 \ep k\,.
\]
Now for $a> {2\pi }/{\ep}$, we write
\[
b_\del(k) =\frac{2}{\del}   \int_0^1 \frac{\rho(s)}{s}\sin( a s ) ds =\frac{2}{\del}  \left( \int_0^{\frac{2\pi}{a}} \frac{\rho(s)}{s}\sin( a s ) ds + \int^1_{\frac{2\pi}{a}} \frac{\rho(s)}{s}\sin( a s ) ds\right)\,.
\]
Using Lemma \ref{lem:positivity} and the assumption that $\rho(s)/s$ is non-increasing on $(0,1)$, we have 
\[
\int^1_{\frac{2\pi}{a}} \frac{\rho(s)}{s}\sin( a s ) ds  = \int_{0}^{1-\frac{2\pi }{a}} \frac{\rho(s+{2\pi}/{a})}{s+{2\pi}/{a}}\sin( a s ) ds \geq 0 \,.
\]
 Then 
\begin{equation}
\label{eq:lower}
b_\del(k) \geq    \frac{2}{\del}   \int_0^{\frac{2\pi}{ a} } \frac{\rho(s)}{s}\sin(as ) ds \geq \frac{ c a^\alpha}{\del}   \int_0^{2\pi} \frac{1}{s^{1+\alpha}} \sin(s) ds =\frac{\tilde C k^\alpha}{\del^{1-\alpha}}   \,,
\end{equation}
where the Assumption \ref{assu:frackernel} is used with $\alpha\in(0,1)$. 
Thus $b_\del(k)$ has a lower bound for all integer $k\geq 1$ independent of $\delta$, 
 which implies the Poincar\'e inequality \eqref{eq:poincare}.
\end{proof}

Theorem \ref{thm:poincare} shows the stability of the Dirichlet integral while the kernel $\rho(s)$ a fractional-type kernel near origin with the fractional component $\alpha\in(0,1)$.
 We can further see from  its proof that  the generated function space is equivalent to a fractional Sobolev space. 
\begin{thm}
Under the Assumption \ref{assu:frackernel}, for each fixed $\delta>0$, the space $V_\delta$ defined by the nonlocal norm $\| \cdot \|_{\delta}$ is equivalent to the fractional Sobolev space $H^\alpha(\Omega)$, where $\alpha\in (0,1)$
is the index defined in \eqref{frackernel}. 
\end{thm}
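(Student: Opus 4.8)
The plan is to show two-sided norm equivalence between $\|u\|_\delta^2 = E_\delta(u) + \|u\|_{L^2}^2$ and the squared $H^\alpha$ norm $\|u\|_{H^\alpha}^2 \sim \sum_{k\geq 1} (1+|k|^{2\alpha})|\widehat{u}(k)|^2$, working entirely on the Fourier side. By Parseval and Lemma \ref{bform}, $E_\delta(u) = \sum_{k\geq 1} b_\delta(k)^2 |\widehat u(k)|^2$, so for fixed $\delta$ the claim reduces to the two-sided estimate
\beq
\label{eq:bsymbolequiv}
c_1\, (1+|k|^{2\alpha}) \leq 1 + b_\delta(k)^2 \leq c_2\, (1+|k|^{2\alpha}) \qquad \text{for all integers } k\geq 1,
\eeq
with $c_1,c_2$ depending on $\delta$ (but not $k$). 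The lower bound in \eqref{eq:bsymbolequiv} for large $k$ is exactly what the proof of Theorem \ref{thm:poincare} already delivers: the estimate \eqref{eq:lower} gives $b_\delta(k) \geq \tilde C k^\alpha/\delta^{1-\alpha}$ once $a = 2\pi k\delta > 2\pi/\ep$, and for the finitely many remaining $k$ (those with $k \leq 1/(\ep\delta)$) one uses positivity of $b_\delta(k)$ from Lemma \ref{lem:positivity} together with the fact that there are only finitely many such $k$, so a uniform positive lower bound exists; combined with the trivial bound $1 + b_\delta(k)^2 \geq 1 \geq \tfrac12(1 + |k|^{2\alpha})$ impossible for large $k$ — rather, one just notes $b_\delta(k)^2 \geq (\tilde C/\delta^{1-\alpha})^2 k^{2\alpha}$ handles large $k$ and the additive $1$ handles the absolute constant. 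This gives the left inequality of \eqref{eq:bsymbolequiv}, which is precisely the statement that $H^\alpha$-control follows from $\|\cdot\|_\delta$-control, i.e. $V_\delta \hookrightarrow H^\alpha(\Omega)$.

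The genuinely new content is the reverse inclusion $H^\alpha(\Omega) \hookrightarrow V_\delta$, i.e.\ the \emph{upper} bound $b_\delta(k)^2 \leq C k^{2\alpha}$ (plus a constant). For this I would bound $|b_\delta(k)|$ directly from its definition \eqref{coef_b}: $|b_\delta(k)| = \tfrac{2}{\delta}\left| \int_0^1 \tfrac{\rho(s)}{s}\sin(as)\,ds\right|$ with $a = 2\pi k\delta$. Split the integral at $s = 1/a$. On $(0,1/a)$ use $|\sin(as)| \leq as$ to get a contribution $\leq \tfrac{2}{\delta}\int_0^{1/a} \rho(s)\,a\,ds$; since $\rho(s) = c s^{-\alpha}$ near the origin (and $1/a$ is small once $k$ is large), this is $\leq \tfrac{2c}{\delta} a \int_0^{1/a} s^{-\alpha}\,ds = \tfrac{2c}{(1-\alpha)\delta} a^{\alpha} = C_\delta k^\alpha$. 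On $(1/a,1)$ use $|\sin(as)| \leq 1$ to get $\leq \tfrac{2}{\delta}\int_{1/a}^1 \tfrac{\rho(s)}{s}\,ds$; since $\rho(s)/s \sim c s^{-1-\alpha}$ near the origin and is bounded (by $\rho(\ep)/\ep$ times a constant, using monotonicity of $\rho(s)/s$) away from the origin, $\int_{1/a}^1 \tfrac{\rho(s)}{s}\,ds \leq C + \tfrac{c}{\alpha} a^{\alpha} \leq C' k^\alpha$. Adding the two pieces yields $|b_\delta(k)| \leq C_\delta\, k^\alpha$ for all $k\geq 1$ (absorbing small-$k$ behavior into the constant), hence $b_\delta(k)^2 \leq C_\delta k^{2\alpha}$, which is the right inequality of \eqref{eq:bsymbolequiv}.

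Combining the two bounds gives \eqref{eq:bsymbolequiv} for each fixed $\delta > 0$ with constants $c_1(\delta), c_2(\delta) > 0$; multiplying by $|\widehat u(k)|^2$ and summing over $k \geq 1$ yields $c_1(\delta)\,\|u\|_{H^\alpha}^2 \leq \|u\|_\delta^2 \leq c_2(\delta)\,\|u\|_{H^\alpha}^2$ for all trigonometric polynomials with mean zero, and since both spaces are the completions of this common dense set under equivalent norms, $V_\delta = H^\alpha(\Omega)$ with equivalent norms. I expect the main obstacle to be handling the intermediate range of wave numbers cleanly: the large-$k$ asymptotics on both sides are governed solely by the fractional behavior \eqref{frackernel} near the origin, but for moderate $k$ the symbol $b_\delta(k)$ depends on the global shape of $\rho$, and one must invoke continuity plus positivity (Lemma \ref{lem:positivity}) to rule out vanishing and to get a clean uniform lower bound on a compact $a$-range, then patch the three regimes together — the bookkeeping there, rather than any deep estimate, is the delicate part. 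One should also take care that the constants are allowed to depend on $\delta$ (the statement only claims equivalence for each fixed $\delta$), which makes the small-$k$ patching harmless.
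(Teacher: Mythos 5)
Your proof is correct and the overall strategy — reduce the norm equivalence to a two-sided estimate on the Fourier symbol $b_\delta(k)$, reuse the lower bound from Theorem \ref{thm:poincare}, and prove a new matching upper bound $b_\delta(k)\lesssim k^\alpha$ — is exactly the paper's. The only real difference is in how the upper bound on $b_\delta(k)$ is obtained. You split the integral at $s=1/a$ and estimate both pieces elementarily, using $|\sin(as)|\le as$ on the near-origin piece and $|\sin(as)|\le 1$ on the tail, which yields $|b_\delta(k)|\le C_\delta k^\alpha$ after summing two contributions. The paper instead splits at $s=\pi/a$ and invokes Lemma \ref{lem:positivity} a second time (via the shift $s\mapsto s+\pi/a$, which flips the sign of $\sin$) to show the entire tail integral $\int_{\pi/a}^1 \frac{\rho(s)}{s}\sin(as)\,ds$ is \emph{negative}, so the upper bound comes from the single piece $\int_0^{\pi/a}$. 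The paper's argument is slicker and symmetric with its own lower-bound derivation, and it gives a one-sided bound with an explicit constant; your version is more elementary (uses only $|\sin x|\le\min(|x|,1)$ rather than the monotonicity lemma) but requires bounding the tail by $O(k^\alpha)$ too, which works but carries a bit more bookkeeping. Both are valid. One minor stylistic note: the sentence where you first assert $1+b_\delta(k)^2\geq \tfrac12(1+|k|^{2\alpha})$ and then retract it with ``impossible for large $k$'' should simply be deleted — the subsequent correct reasoning (large-$k$ from the symbol lower bound, small-$k$ from the additive $1$) is what you want, and the false start reads as an error until one reaches the self-correction. Also confirm explicitly, as you gesture at, that the split point $1/a$ lies inside $(0,\epsilon)$ only when $k > 1/(2\pi\delta\epsilon)$, so the fractional form \eqref{frackernel} applies on the near-origin piece; for the remaining finitely many $k$ one absorbs $b_\delta(k)$ into the constant using continuity and positivity, which is legitimate since $\delta$ is fixed.
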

\begin{proof}
In the proof of Theorem \ref{thm:poincare}, notice that in the case $a>{2\pi}/{\ep}$, we first have 
the lower bound given by \eqref{eq:lower}. Moreover, we could also write
\[
b_\del(k) =\frac{2}{\del}   \int_0^1 \frac{\rho(s)}{s}\sin( a s ) ds =\frac{2}{\del}  \left( \int_0^{\frac{\pi}{a}} \frac{\rho(s)}{s}\sin( a s ) ds + \int^1_{\frac{\pi}{a}} \frac{\rho(s)}{s}\sin( a s ) ds\right).
\]
Using Lemma \ref{lem:positivity} once again,  we have
\[
\begin{split}
\int^1_{\frac{\pi}{a}} \frac{\rho(s)}{s}\sin( a s ) ds =&  \int_{0}^{1-\frac{\pi }{a}} \frac{\rho(s+{\pi}/{a})}{s+{\pi}/{a}}\sin( a s +\pi) ds \\
=& -\int_{0}^{1-\frac{\pi }{a}} \frac{\rho(s+{\pi}/{a})}{s+{\pi}/{a}}\sin( a s) ds <0 \,.
\end{split}
\]
Thus we obtain the upper bound given by
\[
\begin{split}
&b_\del(k) \leq  \frac{2}{\del}   \int_0^{\frac{\pi}{ a} } \frac{\rho(s)}{s}\sin(as ) ds \leq \frac{ c a^\alpha}{\del}   \int_0^{\pi} \frac{1}{s^{1+\alpha}} \sin(s) ds =\frac{\tilde C k^\alpha}{\del^{1-\alpha}}   \,.
\end{split}
\]
Combining both the lower and upper bounds, we get the equivalence to the fractional space and norm.
\end{proof}

We remark that an alternative way to see the above is to use the equivalent formulation given in \eqref{eq:dod} for the
kernel $\omega_\delta$ in \eqref{omega}. For the kernel $\rho(s)$ in \eqref{frackernel} that grows like $s^{-\alpha}$ near the 
origin, a direct calculation shows that $\omega_\delta(s)$ grows like $s^{1-2\alpha}$. Hence,
the nonlocal Dirichlet integral gives a canonical form of the square of a $H^{\alpha}$ semi-norm.

The equivalence of function spaces mentioned above is not simply a mathematical statement, it 
too bears significance in nonlocal modeling of physical
processes involving singularities such as the peridynamic modeling of cracks. 
In practice, in order to allow discontinuous solutions in the underlying energy space $V_\delta$, we see that one should make
$\alpha<1/2$ (in one space dimension) by the standard Sobolev space embedding result. 
Based on the above theorems and the characterization on the kernels given in the Assumption  \ref{assu:frackernel}, we see
that there are indeed reasonable choices of the nonlocal interaction kernels that provide coercive (and stable) forms
of energy for the correspondence theory (to maintain a well-behaved mathematical model), while allowing the discontinuous deformation field (to keep a physically desirable feature).
 However, these possible choices may be limited, for example, they may be subject to conditions given in Assumption  \ref{assu:frackernel}. Note again, we do not claim that the assumption here is the most general one possible since
 our objective is to establish some rigorous results with fairly elementary calculations without making
 the mathematical derivations too technical. 

\section{Other nonlocal variants of the Dirichlet integrals}
\label{sec:oth}

In \cite{silling2}, an alternative formulation to the elastic energy is provided as a possible remedy to alleviate
the loss of coercivity of the original correspondence peridynamic materials models. The main ingredient
consists of an additional contribution involving the nonuniform part of the deformation field $u$ that is denoted by $z$.
In the linearized theory and one space dimension,  we have
$$z(y,x)=u(y+x)-u(y) - \mbG u (y) x.$$
Note that if taking $z=z(y,x)$ as a peridynamic state \cite{silling3} (with dependence on both $y$ and $x$),
we have
$$\mbG z(y,x)=\mbG u (y+x) - \mbG u (y), $$
which is  in general nonzero for nonlinear deformation field $u$.
The {\em stabilized} energy  suggested in \cite{silling2} is given by the following variant of the 
nonlocal Dirichlet integral:
\begin{equation}
\label{eq:pe}
\tilde{E}_\delta(u)=\int_\Od  |\mbG u (x)|^2 dx+
 \int_\Od \int_{-\delta}^\delta  \sigma_\delta(|s|) \left|\frac{u (x+s)-u(x)}{s} - \mbG u (x)
\right|^2 ds dx\,,
\end{equation}
where $\sigma_\delta=\sigma_\delta(|s|)$ is assumed to be a compactly supported kernel for $|s|\leq \delta$. In this case,
the coercivity of $\tilde{E}_\delta$ can be established under some conditions on $\sigma_\delta$ but
without imposing the stronger conditions
on the kernel $\rho_\delta=\rds$ given in the Assumption \ref{assu:frackernel}.
Here, we present an argument that is different from that given in \cite{silling2}. Let us denote
$$ \int_{-\delta}^\delta  \sigma_\delta(|s|)ds=\beta>0\,.$$
Then
\begin{eqnarray*}
\tilde{E}_\delta(u)&=&\int_\Od  |\mbG u (x)|^2 dx+
 \int_\Od \int_{-\delta}^\delta  \sigma_\delta(|s|) \left|\frac{u (x+s)-u(x)}{s} - \mbG u (x)
\right|^2 ds dx\\
&=&\int_\Od  (1+\beta) |\mbG u (x)|^2 dx+
 \int_\Od \int_{-\delta}^\delta  \sigma_\delta(|s|) \left|\frac{u (x+s)-u(x)}{s}\right|^2 ds dx\\
 &&\qquad
 -2 \int_\Od \int_{-\delta}^\delta  \sigma_\delta(|s|) \frac{u (x+s)-u(x)}{s} \mbG u(x)  ds dx.
 \end{eqnarray*}
 Notice that 
 $$ 2 \left| \frac{u (x+s)-u(x)}{s} \mbG u(x)\right| \leq \frac{2}{2+\beta} 
 \left| \frac{u (x+s)-u(x)}{s} \right|^2 + \frac{2+\beta}{2} |\mbG u(x)|^2\,.
 $$ 
 So 
 \begin{eqnarray*}
\tilde{E}_\delta(u)&\geq & \frac{\beta}{2}\int_\Od  |\mbG u (x)|^2 dx
+ \frac{\beta}{2+\beta} 
 \int_\Od \int_{-\delta}^\delta  \sigma_\delta(|s|) \left|\frac{u (x+s)-u(x)}{s}\right|^2 ds dx\,.
\end{eqnarray*}
Now we can see that the second term represents a typical energy for a linear bond-based model and,
for a positive $\sigma_\delta=\sigma_\delta(s)$, the term
vanishes only for a constant field that is identically zero by \eqref{nlceq1}.

In fact, 
under suitable assumptions on $\sigma_\delta$ as those presented in \cite{BBM} and \cite{Ponce}
(and extended to vector fields in \cite{MeDu13,MeDu14e,MeDu14r}), we have the Poincar\'{e}
inequality 
$$
 \int_\Od \int_{-\delta}^\delta  \sigma_\delta(|s|) \left|\frac{u (x+s)-u(x)}{s}\right|^2 ds dx \geq c\int_\Od |u(x)|^2dx
 $$
 for a constant $c>0$, hence the coercivity of the energy  $\tilde{E}_\delta$.  This line of argument
 has a striking similarity with the study of coercivity of the  energy functionals for
 the linearized state-based peridynamic Navier
 equation. The state-based energy functional involves contributions from two terms similar to
 those in \eqref{eq:pe} but with different mechanical interpretations: 
 one from the elongational part and the other
 from the deviatoric part, see \cite{MeDu14e} for more details.

While the coercivity of $\tilde{E}_\delta$ can be guaranteed, such a variant presents other issues. 
For instance, one issue with \eqref{eq:pe} is that while for linear deformation field the energy remains independent of $\beta$,
there are obvious discrepancies when more general nonlinear deformation fields are considered, leading to
different physical responses that are dependent on the choices of $\beta$. The same can be said about
the variational problems related to \eqref{eq:pe} subject to the work of an external force.
The equilibrium solutions, for the same given external force, will be generically different when $\beta$ changes.
Similarly, the Fourier
symbols associated with the modified operator (and thus the dispersion relations) also differ from their
original and un-modified forms.
Thus, the extra penalty term could play a significant role for nonzero $\beta$, which may become an undesirable
feature in practice. 
In contrast, for more careful choices of the kernel $\rho_\delta$, our analysis in the
previous section shows that there is no need to introduce the penalty term for the sake of coercivity or
variational stability. 

Besides the variant discussed above, there are also other options to get a nonlocal Dirichlet integral that is different
from \eqref{eq:do}. For example, we may replace the nonlocal gradient operator in \eqref{ngo} by
the one-sided versions given in \eqref{ngopm}. Similar discussions can be made with these new choices,
along with extensions to vector fields defined in higher space dimensions
(see \cite{dt17} for related calculations).



\section{Implications and generalizations}
\label{sec:con}

An important message taken from the investigation  presented in this work is that one should evaluate carefully the choices 
of the nonlocal interaction kernels when correspondence models are adopted. An intuitive interpretation of our rigorous
analysis is that strengthened
interactions among close-by materials points tend to promote stability. This is perhaps not surprising as the well-posed local
model, when physically valid, represents the extreme case that all interactions are concentrated at the same materials points.
On the other hand, in order to allow solutions with defects and singularities like discontinuities in the deformation field for periydnamics, it is equally important to have the interactions spread over a nonlocal region. Moreover, the 
close-by interactions should not be too strong to disallow the formation of such singular behavior. Hence, adopting
appropriate nonlocal interaction kernels becomes a subtle issue when one desires to take advantage of the correspondence theory to model complex systems.
This finding sheds light on the range of
applicability of peridynamic correspondence material models. The latter should not be adopted blindly but needs not
be thrown out entirely. Via Fourier analysis, we are able to offer
a rather precise characterization on the feasible choices of nonlocal interaction kernels that helps
maintaining coercivity and stability on the continuum level. 

We note that our discussion here is
focused on the continuum level since nonlocal models like peridynamics are indeed fundamentally
continuum theory in the first place.  The stability and coercivity issues should  thus be
variational (or continuum model) properties, independent of discretizations. While the discussion
is focused on one dimensional scalar fields, generalizations to high dimensional vector fields
are possible, and one can find relevant discussions in \cite{dt17}. One may also
consider geometric settings other than the periodic domain, in which case, Fourier analysis is no
longer effective, but  one may use techniques similar to those in \cite{MeDu13} for peridynamic
 models with a sign-changing kernel to study the stability of energy with a more general class of kernels
subject to various nonlocal constraints.

There are inevitably additional issues when numerical discretizations are concerned. 
In fact, a side-effect of the strengthened
interaction for close-by materials points is that more accurate quadrature rules may have to be adopted in discretization.
Still,  one may follow
similar ideas presented here to reach further understanding of the numerical stability and coercivity issues as well.
 This would allow us to delineate
the roles of physical scale and level of numerical resolutions, a point that is worthwhile to be emphasized
 for nonlocal modeling.
 One can also attempt to develop asymptotically
compatible discretizations \cite{dtty16,non10,non11} to the correspondence models so as to retain 
consistency and robustness of 
the numerical approximations.
In addition, the current study can further enable us to introduce mixed formulation to numerically
approximate the nonlocal models
 based on the correspondence material models whenever the latter are physically sound and mathematically
 valid.  Lastly, 
the notion of nonlocal gradient may
also be related to the use of kernel-based integral approximations to differential operators in
methods like SPH and RKPM 
 \cite{dt17,WK:96,mono:05}. Future studies can help build a stronger connections between these similar subjects.

\noindent{\bf Acknowledgement}: The authors would like to thank S. Silling, M. Gunzburger, R. Lehoucq, J. Foster, F. Bobaru,
J.S. Chen, W.K. Liu, and in particular T. Mengesha,  for discussions on related subjects.

\bibliographystyle{siam}

\end{document}